\def\Reals{\mathop{\hbox{\mit I\kern-.2em R}}\nolimits}
\newtheorem{thm}{Theorem}
\newtheorem{lem}{Lemma}
\newtheorem{remark}{Remark}
\begin{document}

\title{A New Reduction from Search SVP to Optimization SVP}
\author{Gengran Hu, Yanbin Pan\\
Key Laboratory of Mathematics Mechanization\\
Academy of Mathematics and Systems Science, Chinese Academy of
Sciences\\Beijing 100190, China\\
 hudiran10@mails.gucas.ac.cn, panyanbin@amss.ac.cn\\}
\date{}
\maketitle

\begin{abstract}
It is well known that search SVP is equivalent to optimization SVP. However, the former
reduction from search SVP to optimization SVP by Kannan needs polynomial times calls to the
oracle that solves the optimization SVP. In this paper, a new rank-preserving reduction is presented with only one call to the optimization
SVP oracle.  It is obvious that the new reduction needs the least
calls, and improves Kannan's classical result. What's more, the idea also leads
a similar direct reduction  from search CVP to optimization CVP with only one call to the oracle.
\end{abstract}

Keywords: Search SVP, Optimization SVP, Lattice, Reduction.

\section{Introduction}
Given a matrix $B=(b_{ij})\in \mathbb{R}^{m\times
    n}$ with rank $n$, the lattice $L(B)$ spanned by
    the columns of $B$ is
    $$L(B)=\{\sum_{i=1}^n x_ib_i|x_i\in\mathbb{Z}\},$$
    where $b_i$ is the $i$-th column of $B$.  Lattice has many important
applications in cryptography. The shortest vector problem (SVP) and the closest vector
problem (CVP) are two of the most famous problems of lattice.\par

    SVP refers to find the shortest non-zero vector in a given lattice. There are three different variants of SVP:

    1. Search SVP: Given a lattice basis $B \in \mathbb{Z}^{m\times n}$, find $v \in
\mathcal
    {L}(B)$ such that $\|v\| = \lambda_1(\mathcal
    {L}(B))$, where $\lambda_1(\mathcal
    {L}(B))$ is the length of the shortest non-zero vector in $\mathcal
    {L}(B)$.

    2. Optimization SVP: Given a lattice basis $B \in \mathbb{Z}^{m\times n}$, find
$\lambda_1(\mathcal
    {L}(B))$.

    3. Decisional SVP: Given a lattice basis $B \in \mathbb{Z}^{m\times n}$ and a rational
$r\in \mathbb{Q}$, decide whether $\lambda_1(\mathcal
    {L}(B))\leq r$ or not.

It has been proved that the three problems are equivalent to each other (see \cite{mic}). It is easy to check that
the decisional SVP is as hard as the optimization SVP and the optimization variant can be reduced to the search
variant.\par

In 1987, Kannan \cite{kan} also showed that the  search variant can be
reduced to the optimization variant.  The basic idea of his reduction is to recover the integer
coefficients of some shortest vector under the given lattice basis by
introducing small errors to the original lattice basis. However, his reduction is a bit complex. It
needs to call polynomial times optimization SVP oracle, since it could not determine the signs
of the shortest vector's entries  at one time. It also
needs oracle to solve optimization SVP for some  lattices with lower rank besides with the same rank as the original lattice. \par

In this paper, we propose a new rank-preserving reduction which can solve the search SVP with only  one call to the optimization
SVP oracle. It is obvious that there is no reduction with less calls than ours. Instead of recovering the shortest vector directly as in \cite{kan}, we
first recover the integer coefficients of some shortest vector under the given lattice basis, then recover the shortest vector.\par
 A similar direct reduction from  search CVP to optimization CVP with only one call also holds whereas some popular reductions \cite{mic, reg} usually takes decisional CVP to bridge the
 search CVP and optimization CVP.  The former reduction from decisional CVP to optimization CVP needs one call to the optimization CVP  oracle, but it
 needs polynomial times calls to the decisional CVP oracle to reduce  search CVP to decisional CVP.


\section{The New Reduction}
For simplicity, we just give the new reduction for the full rank lattice, i.e. $n=m$, as in \cite{kan}. It is easy to general
the new reduction for the lattices with rank $n<m$.

\subsection{Some Notations}
    Given a lattice basis $B=(b_{ij})\in \mathbb{R}^{n\times
    n}$, let $M(B)=\max|b_{ij}|$.   For lattice $L(B)$, we define its SVP solution set $S_{B}$ as:
$$
        S_{B}=\{x\in \mathbb{Z}^n |\|Bx\|=\lambda_1(\mathcal{L}(B))\}
$$

    Denote by $poly(n)$ the polynomial in  $n$.

\subsection{Some Lemmas}
We need some lemmas to prove our main theorem.
   \begin{lem}
    \label{lem1} For every positive integer $n$, there exist $n$ positive integers $a_{1}<a_{2}<\ldots <a_{n}$
    s.t. all the $a_{i}+a_{j}(i\leq j)$'s are distinct
    and $a_{n}$ is bounded by $poly(n)$.
    \end{lem}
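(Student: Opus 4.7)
The statement asks for a Sidon set (also called a $B_2$-set) of size $n$ contained in $\{1,2,\ldots,N\}$ with $N=\mathrm{poly}(n)$. I would invoke (or re-derive) the classical Erd\H{o}s--Tur\'an construction, which achieves the tight bound $N=O(n^2)$.

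My plan is as follows. First, by Bertrand's postulate, pick a prime $p$ with $n\le p\le 2n$. For $k=0,1,\ldots,p-1$ define
\[
c_k \;=\; 2pk + (k^2 \bmod p).
\]
Then $0\le c_k < 2p^2+p$, the values $c_k$ are strictly increasing in $k$ (since incrementing $k$ by one increases the $2pk$ term by $2p$, while $k^2\bmod p$ lies in $[0,p)$), and there are $p\ge n$ of them. I would keep the first $n$ of these, shift each by $1$ to make them positive, and call the result $a_1<a_2<\cdots<a_n$. The bound $a_n \le 2p^2+p+1 = O(n^2)$ is then immediate and is $\mathrm{poly}(n)$.

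The substantive step is verifying the Sidon property. Suppose $a_i+a_j = a_k+a_\ell$ with $i\le j$, $k\le \ell$. Writing this out with the definition $c_s = 2ps + (s^2\bmod p)$ and using that $(s^2\bmod p)+(t^2\bmod p) < 2p$, the equation forces the ``high parts'' and the ``low parts'' to match separately:
\[
i+j \;=\; k+\ell, \qquad (i^2+j^2)\bmod p \;=\; (k^2+\ell^2)\bmod p.
\]
Since $i,j,k,\ell\in\{0,\ldots,p-1\}$, the sums $i+j$ and $k+\ell$ lie in $[0,2p-2]$, but after combining the two displayed identities, one gets $i^2+j^2\equiv k^2+\ell^2\pmod p$, and together with $i+j=k+\ell$ this yields $ij\equiv k\ell\pmod p$. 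Hence $\{i,j\}$ and $\{k,\ell\}$ are roots of the same monic quadratic over $\mathbb{F}_p$, so $\{i,j\}=\{k,\ell\}$ as multisets in $\mathbb{F}_p$, and since the entries lie in $\{0,\ldots,p-1\}$ the equality holds as ordinary multisets.

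The main obstacle is this last modular step: one must justify cleanly that the ``low part'' $(k^2\bmod p)+(\ell^2\bmod p)$ can be treated as a single residue modulo $2p$, and separate the two contributions in the sum $a_i+a_j=a_k+a_\ell$. Once that is handled, distinctness of pairwise sums, monotonicity, and the $O(n^2)$ bound all follow routinely, completing the proof of Lemma~\ref{lem1}.
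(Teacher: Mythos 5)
Your proof is correct, but it takes a genuinely different route from the paper's. The paper simply takes $a_k=(n^2+k-1)^2$, i.e.\ $n$ consecutive squares shifted so that each base is at least $n^2$: expanding, $a_i+a_j=2n^4+2n^2\bigl((i-1)+(j-1)\bigr)+\bigl((i-1)^2+(j-1)^2\bigr)$, and since the ``low part'' $(i-1)^2+(j-1)^2$ is less than $2n^2$, an equality of two such sums forces the values of $i+j$ and of $(i-1)^2+(j-1)^2$ to agree separately; knowing both the sum and the sum of squares of a pair determines the pair as a multiset over the integers, with no modular arithmetic at all. This yields $a_n\le(n^2+n-1)^2=O(n^4)$ in a completely self-contained way. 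Your Erd\H{o}s--Tur\'an construction is the same high-part/low-part separation in disguise ($2pk$ plays the role of $2n^2(k-1)$, and $k^2\bmod p$ the role of $(k-1)^2$), but carried out modulo a prime, which buys the optimal bound $O(n^2)$ at the cost of Bertrand's postulate and a Vieta argument over $\mathbb{F}_p$. Two small points to tidy up in your version: you need $p$ odd to pass from $2ij\equiv 2k\ell\pmod p$ to $ij\equiv k\ell\pmod p$ (for $n\ge 2$ Bertrand supplies $p\ge 3$, and $n=1$ is trivial), and the separation step you flag as the ``main obstacle'' is in fact routine, exactly because each residue $k^2\bmod p$ lies in $[0,p)$ so the low part of any pairwise sum lies in $[0,2p)$ while the high parts are multiples of $2p$. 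Since the lemma only requires a polynomial bound, the quadratic savings are not needed here, but both arguments are valid.
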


    \begin{proof} We can take $a_{k}=(n^2+k-1)^2$ for $k=1, 2, \cdots,n$. Suppose
    $a_{i_{1}}+a_{j_{1}}=a_{i_{2}}+a_{j_{2}}$ for some $i_{1}, j_{1}, i_{2}, j_{2}$,
    we get
    $(i_{1}-1)^2+(j_{1}-1)^2+2n^2((i_{1}-1)+(j_{1}-1))=(i_{2}-1)^2+(j_{2}-1)^2+2n^2((i_{2}-1)+(j_{2}-1))$.
  Since $(i_{1}-1)^2+(j_{1}-1)^2,(i_{2}-1)^2+(j_{2}-1)^2<2n^2$,  we have $(i_{1}-1)^2+(j_{1}-1)^2=(i_{2}-1)^2+(j_{2}-1)^2$
    and $i_{1}+j_{1}=i_{2}+j_{2}$, which leads $\{i_{1}, j_{1}\}=\{i_{2}, j_{2}\}$. Hence all the $a_{i}+a_{j}(i\leq
    j)$'s are distinct. It is obvious that $a_{n}\leq(n^2+n-1)^2$.
    \end{proof}

    \begin{lem}
    \label{lem2} Given positive odd integer $p>2$, and any positive integer $n$, which satisfies $n=\sum_{i=0}^{k}n_ip^i$ where
    $|n_i|\leq\lfloor p/2\rfloor$, then we can recover the coefficients $n_i$'s in polynomial time.
    \end{lem}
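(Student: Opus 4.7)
The plan is to extract the digits $n_i$ one at a time, starting from $n_0$, by repeatedly reducing the current residual modulo $p$ to a balanced representative in $\{-\lfloor p/2 \rfloor, \ldots, \lfloor p/2 \rfloor\}$ and then dividing out by $p$. Since there is a unique such balanced expansion, the greedy procedure is forced to return precisely the $n_i$ appearing in the hypothesized representation, and the loop terminates after logarithmically many steps.

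First I would verify uniqueness of the balanced $p$-ary expansion. Suppose $n = \sum_i n_i p^i = \sum_i n_i' p^i$ with both sets of digits bounded by $\lfloor p/2 \rfloor$ in absolute value. Reducing modulo $p$ gives $n_0 \equiv n_0' \pmod{p}$. Because $p$ is odd, $\lfloor p/2 \rfloor = (p-1)/2$ and the set $\{-\lfloor p/2 \rfloor, \ldots, \lfloor p/2 \rfloor\}$ consists of exactly $p$ consecutive integers, so it is a complete system of residues modulo $p$; equivalently $|n_0 - n_0'| \le p-1 < p$ forces $n_0 = n_0'$. Subtracting, dividing by $p$, and inducting on $k$ yields $n_i = n_i'$ for all $i$. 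The oddness assumption is used in exactly this spot; for $p=2$ the digit set $\{-1,0,1\}$ would be redundant.

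For the algorithm, set $m_0 := n$. At step $i$, compute the ordinary remainder $r_i := m_i \bmod p \in \{0,1,\ldots,p-1\}$, and define
$$n_i := \begin{cases} r_i & \text{if } r_i \le \lfloor p/2 \rfloor, \\ r_i - p & \text{otherwise,} \end{cases}$$
so that $|n_i| \le \lfloor p/2 \rfloor$ and $n_i \equiv m_i \pmod p$. Then set $m_{i+1} := (m_i - n_i)/p \in \mathbb{Z}$ and iterate until $m_{i+1} = 0$. The bound $|m_{i+1}| \le (|m_i| + \lfloor p/2 \rfloor)/p$ shows that the magnitudes shrink geometrically, so the loop halts after $O(\log n / \log p)$ iterations; each step is one Euclidean division plus a comparison, which is polynomial in the bit lengths of $n$ and $p$. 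By the uniqueness established above, the output $(n_0, n_1, \ldots, n_k)$ coincides with the digits in the given expansion.

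The only subtle point, and the one I would treat as the main obstacle, is the uniqueness argument: without it the greedy procedure would still produce \emph{some} balanced representation, but not necessarily the one specified by the hypothesis, and the lemma's statement demands the latter. Everything else is a routine ``division with balanced remainder'' loop.
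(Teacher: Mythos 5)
Your proposal is correct and follows the same route as the paper: reduce modulo $p$ to the balanced representative in $[-\lfloor p/2\rfloor,\lfloor p/2\rfloor]$, peel off $n_0$, divide by $p$, and recurse. You additionally spell out the uniqueness of the balanced expansion and the termination bound, which the paper leaves implicit, but the underlying argument is identical.
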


    \begin{proof} We can recover $n_0$ by computing $a \equiv n\mbox{ mod } p$ and choose $a$ in the interval from $-\lfloor p/2\rfloor$
    to $\lfloor p/2\rfloor$. After obtaining $n_0$, we get another integer $(n-n_0*p^0)/p$. Recursively, we can recover all the coefficients. This can be done in polynomial time obviously.
    \end{proof}

   \begin{lem}
    \label{lem3}  For bivariate polynomial $f(x,y)=xy$,
    given any lattice basis matrix $B\in \mathbb{Z}^{n\times n}$, $\lambda_1(L(B))$ has an upper
    bound $f(M,n)$, where $M=M(B)$. What's more,  for every $x\in S_{B}$, $|x_i|$ $(i=1,\cdots, n)$ has an upper
    bound $f(M^n,n^n)$.
    \end{lem}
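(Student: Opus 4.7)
The plan is to prove the two bounds separately, using only elementary linear algebra: the first by picking an explicit short vector in the lattice, and the second by inverting $B$ via Cramer's rule and controlling the cofactors with Hadamard's inequality.

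For the first claim, I would simply observe that every column $b_j$ of $B$ is a nonzero element of $L(B)$. Since each entry $b_{ij}$ has absolute value at most $M$, we get $\|b_j\|\le\sqrt{n}\,M$, and therefore
\[
\lambda_1(L(B))\le\min_j\|b_j\|\le\sqrt{n}\,M\le nM=f(M,n).
\]
This is essentially immediate.

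For the second claim, let $x\in S_B$ and write $v=Bx$, so that $\|v\|=\lambda_1(L(B))\le nM$ by the first part. Because $B$ is a non-singular integer matrix, $|\det(B)|\ge 1$. By Cramer's rule,
\[
x_i=\frac{\det(B_i)}{\det(B)},
\]
where $B_i$ is obtained from $B$ by replacing its $i$-th column with $v$. Applying Hadamard's inequality to $B_i$ gives
\[
|\det(B_i)|\le\|v\|\prod_{j\ne i}\|b_j\|\le(nM)\cdot(\sqrt{n}\,M)^{n-1}=M^{n}\cdot n^{(n+1)/2}.
\]
Combined with $|\det(B)|\ge 1$ and the crude estimate $n^{(n+1)/2}\le n^{n}$ (valid for all $n\ge 1$), this yields
\[
|x_i|\le M^{n}\cdot n^{n}=f(M^{n},n^{n}),
\]
as desired.

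There is no real obstacle here: the only ingredients are the trivial observation that basis vectors lie in the lattice, the integrality of $B$ (forcing $|\det B|\ge 1$), and Hadamard's inequality. The mild cosmetic step is verifying that the resulting bound $M^{n}n^{(n+1)/2}$ fits inside the stated bound $M^{n}n^{n}$, which is immediate.
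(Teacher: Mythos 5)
Your proof is correct and follows essentially the same route as the paper's: bound $\lambda_1$ by a column norm, then apply Cramer's rule with Hadamard's inequality and $|\det B|\ge 1$ to bound the coordinates. The only cosmetic difference is that you bound $\|v\|$ by $nM$ rather than $\sqrt{n}M$ before invoking Hadamard, which still lands comfortably within $M^n n^n$.
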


    \begin{proof} The length of any column of $B$ is an upper bound of $\lambda_1(L(B))$, so $\lambda_1(L(B))\leq n^{1/2}M\leq nM$.\par
    For $x\in S_B$, we let $y=Bx$, then $\|y\|=\lambda_1(L(B))\leq \sqrt{n}M$. By Cramer's rule, we know that
    $$x_i=\dfrac{\det(B^{(i)})}{\det(B)},$$
    where $B^{(i)}$ is formed by replacing the $i$-th column of $B$ by $y$. By Hadamard's inequality, $|\det(B^{(i)})|\leq n^{n/2}M^n\leq n^nM^n$. We know $|\det(B)|\geq 1$ since $\det(B)$ is a non-zero integer. Hence $|x_i|\leq n^nM^n$.
    \end{proof}

\subsection{The Main Theorem}
\begin{thm}  Assume there exists an  oracle $\mathcal{O}$ that can solve the optimization SVP
for any lattice $L(B^\prime)$ with basis $B^\prime\in\mathbb{Z}^{n\times n}$, then there is an algorithm that
can solve the search SVP for any lattice $L(B)$ with basis $B\in\mathbb{Z}^{n\times n}$ with only one call to $\mathcal{O}$ in $poly(\log_2{M},n,\log_2{n})$  time, where $M=M(B)$.
\end{thm}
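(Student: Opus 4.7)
The idea is to call the oracle once on a single ``coefficient-encoding'' extension $B'$ of $B$, chosen so that $\lambda_1(L(B'))^2$ simultaneously encodes $\lambda_1(L(B))^2$ and a scalar signature $s(x^{*})^2$ of the integer coefficients of some shortest vector $Bx^{*}$; the two pieces will be separated by a single integer division by $C^2$, and the signature will then let us read off $x^{*}$ digit-by-digit.

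Concretely, I would first apply Lemma~\ref{lem1} to fix a Sidon-type sequence $a_1<\cdots<a_n$ with all sums $a_i+a_j$ ($i\le j$) distinct and $a_n=\operatorname{poly}(n)$, then pick an odd integer $p>4n^{2n}M^{2n}$ and a positive integer $C>n^{n+1}M^n p^{a_n}$. Set $r=(p^{a_1},\ldots,p^{a_n})$ and form
\[
 B' \;=\; \begin{pmatrix} C B \\ r^T \end{pmatrix},
\]
a rank-$n$ integer basis (whose ambient dimension is $n+1$; if a square basis is demanded one may append a last column with a sufficiently large diagonal entry, which does not affect the minimum). For any $x\in\mathbb{Z}^n$ one has $\|B'x\|^2 = C^2\|Bx\|^2 + s(x)^2$ with $s(x):=\sum_{i=1}^n p^{a_i}x_i$. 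By Lemma~\ref{lem3} any $x\in S_B$ has $|x_i|\le n^n M^n$, so $|s(x)|<C$ and $s(x)^2<C^2$; on the other hand, for $x\notin S_B$ integrality of $\|Bx\|^2$ forces $\|Bx\|^2\ge\lambda_1(L(B))^2+1$, and the extra $C^2$ swamps every admissible $s(y)^2$. Hence the shortest vector of $L(B')$ must arise from some $x^{*}\in S_B$, and
\[
 L \;:=\; \lambda_1(L(B'))^2 \;=\; C^2\lambda_1(L(B))^2 + s(x^{*})^2.
\]
A single integer division by $C^2$ recovers $s(x^{*})^2 = L\bmod C^2$ (and, as a by-product, $\lambda_1(L(B))^2=\lfloor L/C^2\rfloor$).

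The last step is to decode $x^{*}$ from $s(x^{*})^2$. Expanding,
\[
 s(x^{*})^2 \;=\; \sum_i p^{2a_i}(x_i^{*})^2 \;+\; 2\sum_{i<j} p^{a_i+a_j}x_i^{*}x_j^{*},
\]
the Sidon property places every nonzero term at a distinct power of $p$, and our choice of $p$ ensures that each ``digit'' $(x_i^{*})^2$ and $2x_i^{*}x_j^{*}$ lies in $[-\lfloor p/2\rfloor,\lfloor p/2\rfloor]$. The displayed identity is therefore precisely the signed base-$p$ representation of the integer $s(x^{*})^2$ produced by Lemma~\ref{lem2}, so in polynomial time I extract every digit: from the digit at position $2a_i$ I read $|x_i^{*}|$, and from the digit at position $a_i+a_j$ I read $\operatorname{sign}(x_i^{*}x_j^{*})$. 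Fixing the sign of the first nonzero coordinate arbitrarily then reconstructs $x^{*}$ up to the inherent $\pm$ ambiguity of SVP (and $-x^{*}\in S_B$ is equally valid); the output is $Bx^{*}$.

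The main obstacle will be calibrating $p$ and $C$ simultaneously: $p$ must be large enough for every product $2x_i^{*}x_j^{*}$ to fit inside the symmetric base-$p$ window required by Lemma~\ref{lem2}, while $C$ must exceed $\max_{y\in S_B}|s(y)|$ (which is controlled by $p^{a_n}$) so that a single unit of integer gap in $\|Bx\|^2$ strictly outweighs every $s(y)^2$---yet both $p$ and $C$ must keep bit-length $\operatorname{poly}(n,\log M)$ for the time bound. Lemma~\ref{lem3} supplies the crucial coordinate bound and Lemma~\ref{lem1} keeps $a_n$ polynomial so that the powers $p^{a_i}$ remain polynomially bounded in bit-length; with these calibrated, every step apart from the single oracle call is polynomial integer arithmetic, giving the claimed $\operatorname{poly}(\log_2 M,n,\log_2 n)$ running time.
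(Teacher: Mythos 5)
Your argument is essentially sound, but it takes a genuinely different route from the paper, and the difference matters for the theorem as literally stated. You append a fresh coordinate row $r^{T}=(p^{a_1},\dots,p^{a_n})$ to a scaled copy $CB$, so that $\|B'x\|^{2}=C^{2}\|Bx\|^{2}+s(x)^{2}$ holds \emph{exactly}; the paper instead adds the powers $\epsilon_i=p^{a_i}$ into the \emph{first row} of $\epsilon_{n+1}B$, keeping the basis in $\mathbb{Z}^{n\times n}$. Your decomposition is cleaner: you need only $n$ Sidon elements rather than $n+1$, you have no cross terms $2c(x)x_i\epsilon_{n+1}\epsilon_i$ to control, and you avoid the paper's perturbation analysis (the determinant bound and the inequalities establishing $S_{B_\epsilon}\subseteq S_B$), replacing it with the one-line integrality argument that $\|Bx\|^{2}\ge\lambda_1^{2}+1$ off $S_B$ makes the $C^{2}$ gap swamp every $s(y)^{2}$. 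What the paper's messier construction buys is exactly what you lose: rank and dimension preservation. Your $B'$ lies in $\mathbb{Z}^{(n+1)\times n}$, and your proposed squaring fix produces an $(n+1)\times(n+1)$ basis, so you end up querying an oracle for $(n+1)$-dimensional lattices, whereas the theorem hypothesizes an oracle only for $B'\in\mathbb{Z}^{n\times n}$; the paper advertises its reduction as rank-preserving precisely to meet that hypothesis. Also, your parenthetical that a large diagonal entry ``does not affect the minimum'' needs a sentence of justification: the new generator can be added to existing ones, producing last coordinates $s(x)+Dx_{n+1}$, so you must choose $D$ larger than both $2\max_{x\in S_B}|s(x)|$ and $\sqrt{C^{2}\lambda_1(L(B))^{2}+\max_{x\in S_B}s(x)^{2}}$ to rule out any shortening; both bounds have polynomial bit-length, so this is repairable, but as written it is asserted rather than proved. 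With that repair (or with the oracle understood to accept non-full-rank or $(n+1)$-dimensional inputs), your proof is correct and the complexity analysis goes through.
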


\begin{proof}
The main steps of the algorithm are as below:\\
(1) Constructing a new lattice basis $B_{\epsilon}\in\mathbb{Z}^{n\times n}$.

    We construct $B_{\epsilon}$ from the original
    lattice $B$:
   $$
    B_{\epsilon}=\epsilon_{n+1}B+ \left(                 
    \begin{array}{cccc}   
    \epsilon_{1} & \epsilon_{2} & \dots & \epsilon_{n}\\  
    0 & 0 & \dots & 0\\  
    \vdots & \vdots &  & \vdots\\
    0 & 0 & \dots & 0\\
    \end{array}
    \right)                
    $$
    where the $\epsilon_{i}$ will be determined as below. \par
     For any $x\in \mathbb{Z}^n$, we difine $c(x)=\sum_{i=1}^{n}b_{1i}x_{i}$. For $x\in S_B$,  by Lemma \ref{lem3}, $|x_i|$ has an upper
    bound $f(M^n,n^n)$. Let $M_{1}=2f((M+1)^n,n^n)$. In addition, $\|Bx\|=\lambda_1(L(B))$  is bounded by
    $f(M,n)$. Let $M_{2}=f(M+1,n)$. $|c(x)|$  is also bounded by $M_{2}$ since $|c(x)|\leq\|Bx\|$.  We let
     $$p=2*\max{\{M_2^2,2M_1M_2,2M_1^2\}}+1.$$

    By Lemma \ref{lem1}, we can choose $n+1$ positive integers $a_{1}<a_{2}<\ldots <a_{n+1}$,
    such that all the $a_{i}+a_{j}(i\leq j)$'s are distinct
    where $a_{n+1}$ is bounded by $poly(n)$. Let
  $$
    \epsilon_{i}=p^{a_{i}}.
  $$
  \par
   We first show that $|\det{(\frac{1}{\epsilon_{n+1}}B_\epsilon)|\geq \frac{1}{2}}$, so $B_\epsilon$ is indeed a lattice basis. Notice that
   $$\det{(\frac{1}{\epsilon_{n+1}}B_\epsilon)}=\det(B)+\sum_{i=1}^n\alpha_i\frac{\epsilon_i}{\epsilon_{n+1}},$$
   where $\alpha_i$ is the cofactor of $B_{1i}$ in $B$. Since $\frac{\epsilon_i}{\epsilon_{n+1}}\leq \frac{1}{p^2}$ and $|\alpha_i|\leq M^{n-1}(n-1)^{n-1}$, $|\sum_{i=1}^n\alpha_i\frac{\epsilon_i}{\epsilon_{n+1}}|\leq \frac{1}{p^2}M^{n-1}n^n<\frac{1}{2}$. By the fact $\det(B)$ is a non-zero integer, we get
  \begin{equation}\label{equ1}
   |\det{(\frac{1}{\epsilon_{n+1}}B_\epsilon)|\geq \frac{1}{2}}.
   \end{equation}
   
   We claim that $S_{B_\epsilon}\subseteq S_B$. Since $S_{B_\epsilon}= S_{\frac{1}{\epsilon_{n+1}}B_\epsilon}$, it is enough to prove $S_{\frac{1}{\epsilon_{n+1}}B_\epsilon}\subseteq S_B$. \par

   For any $x\in S_{\frac{1}{\epsilon_{n+1}}B_\epsilon}$, by (\ref{equ1}) and the proof of Lemma \ref{lem3}, we know that
   $|x_i|\leq M_1$, $|c(x)|\leq M_2$.  By the choice of $p$,  $ x_{i}^2, 2c(x)x_{i}, 2x_{i}x_{j}$ are in the interval $[-\lfloor p/2 \rfloor, \lfloor p/2 \rfloor]$. Together with the fact that $\frac{\epsilon_{i}\epsilon_{j}}{\epsilon_{n+1}^2}(i\leq j)$'s are different powers of $p$,  we have
   \begin{equation}\label{equ2}
    \begin{array}{rcl}
   \lambda_1(L(\frac{1}{\epsilon_{n+1}}B_\epsilon))^2 &=& \|\frac{1}{\epsilon_{n+1}}B_\epsilon x\|^2\\
  &=&\|Bx\|^2+\sum_{i=1}^{n}x_{i}^2(\frac{\epsilon_{i}}{\epsilon_{n+1}})^2+\sum_{i=1}^{n}2c(x)x_{i}\frac{\epsilon_{i}}{\epsilon_{n+1}}
   +\sum_{i<j}2x_{i}x_{j}\frac{\epsilon_{i}\epsilon_{j}}{\epsilon_{n+1}^2} \\
         &> &  \|Bx\|^2 -(\lfloor p/2 \rfloor+1)\frac{\epsilon_{n}}{\epsilon_{n+1}}.
    \end{array}
   \end{equation}
   Similarly, for any $y\in S_B$, we have
   \begin{equation}\label{equ3}
     \begin{array}{rcl}
     \|\frac{1}{\epsilon_{n+1}}B_\epsilon y\|^2  &=& \|By\|^2+\sum_{i=1}^{n}y_{i}^2(\frac{\epsilon_{i}}{\epsilon_{n+1}})^2+\sum_{i=1}^{n}2c(y)y_{i}\frac{\epsilon_{i}}{\epsilon_{n+1}}
   +\sum_{i<j}2y_{i}y_{j}\frac{\epsilon_{i}\epsilon_{j}}{\epsilon_{n+1}^2}  \\
       &<& \lambda_1(L(B))^2 +(\lfloor p/2 \rfloor+1)\frac{\epsilon_{n}}{\epsilon_{n+1}}
       \end{array}
   \end{equation}
 Next, we prove $S_{\frac{1}{\epsilon_{n+1}}B_\epsilon}\subseteq S_B$. Suppose there exists $x\in S_{\frac{1}{\epsilon_{n+1}}B_\epsilon}$ but $x\not\in S_B$, then  \begin{equation}\label{equ4}
 \|Bx\|^2\geq \lambda_1(L(B))^2+1.
   \end{equation}
  Notice that $\frac{\epsilon_{n}}{\epsilon_{n+1}}<\frac{1}{p^2}$, we have $0<(\lfloor p/2 \rfloor+1)\frac{\epsilon_{n}}{\epsilon_{n+1}}<\frac{1}{2}$. Together with (\ref{equ2}), (\ref{equ3}) and (\ref{equ4}), we have
  $$
  \begin{array}{rcl}
        \lambda_1(L(\frac{1}{\epsilon_{n+1}}B_\epsilon))^2 &> &  \|Bx\|^2 -(\lfloor p/2 \rfloor+1)\frac{\epsilon_{n}}{\epsilon_{n+1}}\\
        &\geq& \lambda_1(L(B))^2 +1-(\lfloor p/2 \rfloor+1)\frac{\epsilon_{n}}{\epsilon_{n+1}} \\
        &>& \lambda_1(L(B))^2 +(\lfloor p/2 \rfloor+1)\frac{\epsilon_{n}}{\epsilon_{n+1}}\\
        &>& \|\frac{1}{\epsilon_{n+1}}B_\epsilon y\|^2,
   \end{array}
  $$
 which is an contradiction, since $\frac{1}{\epsilon_{n+1}}B_\epsilon y\in L(\frac{1}{\epsilon_{n+1}}B_\epsilon)$. Hence $S_{B_\epsilon}\subseteq S_B$.\\
(2) Querying the oracle $\mathcal{O}$ with $B_\epsilon$ once, we get
    $\lambda_1(\mathcal{L}(B_{\epsilon}))$. \par
   So there exists $x=(x_{1},\ldots, x_{n})^{\mathrm{T}}\in S_{B_{\epsilon}}\subseteq S_B$, such that
 $$
   \|Bx\|^2\epsilon_{n+1}^2+\sum_{i=1}^{n}x_{i}^2\epsilon_{i}^2+\sum_{i=1}^{n}2c(x)x_{i}\epsilon_{n+1}\epsilon_{i}
   +\sum_{i<j}2x_{i}x_{j}\epsilon_{i}\epsilon_{j}=\lambda_1(\mathcal{L}(B_{\epsilon}))^2
$$
(3) Recovering all the $x_i$'s and output $Bx$.\par
Since $x\in S_B$, every coefficient  $\|Bx\|^2, x_{i}^2, 2c(x)x_{i}, 2x_{i}x_{j}$ is in the interval $[-\lfloor p/2 \rfloor, \lfloor p/2 \rfloor]$ and
$\epsilon_i\epsilon_j$ $(i\leq j)$'s are different powers of $p$.
Hence, $\log_2{(\lambda_1(\mathcal{L}(B_{\epsilon})))}$ is bounded by $poly(\log_2M,n,\log_2n)$. Furthermore,
    by Lemma \ref{lem2}, we can recover all the coefficients in  $poly(\log_2M,n,\log_2n)$ time.
    Especially, we can recover  all $x_{i}^2$ and $x_{i}x_{j}(i\neq j)$. Let $k=\min\{i|x_i\neq0\}$. We fix $x_k=\sqrt{x_k^2}>0$, and can recover all
    the remaining  $x_j=sign(x_{k}x_{j})\sqrt{x_{j}^2}$ according to  $x_{j}^2$ and $x_{k}x_{j}(k\neq j)$.\par

It is easy to check that  the complexity of every step is bounded by $poly(\log_2{M},n,\log_2n)$.
\end{proof}

\begin{remark}
For any search CVP instant $(B, t)$, given an oracle which can solve the optimization CVP,  we can call the oracle with
$(B_\epsilon, \epsilon_{n+1}t)$ only once to solve the search CVP similarly.
\end{remark}

\section{Conclusions}

In this paper, we give a new reduction from search SVP to optimization SVP with only one call,  which is the least, to the  optimization SVP  oracle.
A similar result for CVP also holds. However,  it seems hard to apply the idea for GapSVP or GapCVP, since the new reduction is also
sensitive to the error.

\end{document}